\pdfoutput=1
\documentclass[sigconf]{acmart}
\usepackage{graphicx}
\usepackage{textcomp}
\usepackage{amsmath,amsthm,amsfonts}
\usepackage{thm-restate}
\usepackage{bbm}
\usepackage{xspace}
\usepackage{tikz}
\usepackage{color}
\usepackage{hyperref}
\usepackage{physics}
\usepackage[inline]{enumitem}

\usepackage{algpseudocode}
\usepackage{algorithm}
\usepackage{subcaption}
\usepackage{pdfpages}

\usepackage{hyperref}

\begin{CCSXML}
<ccs2012>
   <concept>
       <concept_id>10003752.10003809.10010172</concept_id>
       <concept_desc>Theory of computation~Distributed algorithms</concept_desc>
       <concept_significance>500</concept_significance>
       </concept>
 </ccs2012>
\end{CCSXML}

\ccsdesc[500]{Theory of computation~Distributed algorithms}

\keywords{Population protocols, $k$-majority problem}

\newtheorem{claim}{Claim}

\usepackage{tikz}
\usetikzlibrary{positioning, fit}
\usetikzlibrary{svg.path}
\usetikzlibrary{shapes.geometric}
\usetikzlibrary{patterns}
\usetikzlibrary{arrows.meta} %

\setenumerate[1]{label={(\arabic*)}}

\definecolor{blue}{RGB}{0,50,200}
\definecolor{magenta}{RGB}{255,0,255}
\hypersetup{colorlinks,linkcolor=blue,urlcolor=blue,citecolor=magenta}

\usetikzlibrary{positioning}

\newcommand{\prm}{\textsc{Circles}}

\newcommand{\N}{\mathbb{N}}

\usepackage[nolist, nohyperlinks]{acronym}
\acrodef{spp}[SPP]{standard population protocol}
\acrodef{ppus}[PPUS]{population protocol with unordered data and storage}
\acrodef{crn}[CRN]{chemical reaction network}

\title[Relative Majority with Cubic State Complexity]{
  Brief Announcement: Minimizing Energy Solves Relative Majority with a Cubic Number of States in Population Protocols
}

\author{Tom-Lukas Breitkopf}
\affiliation{%
    \institution{TU Berlin}
    \city{Berlin}
    \country{Germany}}
\email{t.breitkopf@tu-berlin.de}
\orcid{0009-0008-2875-1945}

\author{Julien Dallot}
\affiliation{%
  \institution{TU Berlin}
  \city{Berlin}
  \country{Germany}}
\email{judafa@protonmail.com}
\orcid{0009-0008-1286-1373}

\author{Antoine El-Hayek}
\affiliation{%
  \institution{Institute of Science and Technology Austria}
  \city{Klosterneuburg}
  \country{Austria}}
\email{antoine.el-hayek@ist.ac.at}
\orcid{0000-0003-4268-7368}

\author{Stefan Schmid}
\affiliation{%
 \institution{TU Berlin}
 \city{Berlin}
 \country{Germany}}\email{stefan.schmid@tu-berlin.de}
 \orcid{0000-0002-7798-1711}

\acmYear{2025}\copyrightyear{2025}
\acmConference[PODC '25]{ACM Symposium on Principles of Distributed Computing}{June 16--20, 2025}{Huatulco, Mexico}
\acmBooktitle{ACM Symposium on Principles of Distributed Computing (PODC '25), June 16--20, 2025, Huatulco, Mexico}
\acmDOI{10.1145/3732772.3733512}
\acmISBN{979-8-4007-1885-4/25/06}

\begin{document}

\begin{abstract}
  This paper revisits a fundamental distributed computing problem in the population protocol model. 
  Provided $n$ agents each starting with an input color in $[k]$, the relative majority problem asks to find the predominant color.
  In the population protocol model, at each time step, a scheduler selects two agents that first learn each other's states and then update their states based on what they learned.

  We present the \prm{} protocol that solves the relative majority problem with $k^3$ states. It is always-correct under weakly fair scheduling.
  Not only does it improve upon the best known upper bound of $O(k^7)$, but it also shows a strikingly simpler design inspired by energy minimization in chemical settings.
\end{abstract}

\maketitle

\section{Introduction}
\label{sec:intro}

Population protocols model a computation distributed across a population of $n$ all-identical agents that interact in a chaotic, unpredictable manner.
The model was introduced by Angluin et al.~in 2006~\cite{angluin2006networksoffinitestatesensors} to model a network of small sensors; since then, it attracted a growing attention \cite{elsasser2018recentresultsinpp, alistarh2018recentalgorithmicadvancesinpp, bankhamer2022ppforexactpluralityconsensus, czerner2023lowerboundsonthestatecomplexityofpp} for its broad applications ranging from dynamics in social groups \cite{becchetti2015pluralityconsensus} to chemical reactions \cite{doty2014timingincrns, natale2019necessarymemorycomputeplurality} as well as its theoretical interest~\cite{s00446-007-0040-2, lipton1976}.

\textbf{Model.}
In this paper, we focus on the relative majority problem.
In this problem, each agent is initially assigned one \emph{input color} in $0, 1 \dots k-1$ and the goal is to find the color with the greatest support (we assume no ties).
As part of the problem, a \emph{scheduler} also specifies an infinite sequence of \emph{pairwise interactions} between the agents.
Then the protocol runs: pairs of agents interact one after the other according to the planned schedule; when two agents interact, they both 1) learn the \emph{state} of the other agent and 2) update their current state as the protocol specifies.
Two agents with the same state are perfectly identical: after it interacted, an agent's new state only depends on its previous own state and on the state of the other agent it just interacted with.

\begin{definition}[Configuration]
  A configuration is a complete description of the population at a given time.
  As agents with the same state are identical, we define a configuration as the multiset that contains all the states of the population.
\end{definition}

In the context of population protocols, we say that a protocol solves the problem if, for all possible input color assignments and all possible sequences of interactions, every agent eventually outputs the correct majority color, forever.
However, an unconstrained scheduler makes the problem trivially impossible (by isolating some agents for instance), we therefore assume that the scheduler is \emph{weakly fair}:

\begin{definition}[Weakly Fair Scheduler]
  A weakly fair scheduler produces interaction schedules where each possible interaction pair happens infinitely often.
\end{definition}

\textbf{Contribution.}
We present \prm{}, an always-correct protocol that solves the relative majority problem under a weakly fair scheduler.
We designed \prm{} with an emphasis on \emph{state complexity}, which is the number of different states an agent can have.
\prm{} has a state complexity of $k^3$, which improves upon the best known upper bound of $O(k^{7})$ \cite{gasieniec2017deterministicppforexactmajorityandpluarlity} and narrows the gap with the best known lower bound of $\Omega(k^{2})$ \cite{natale2019necessarymemorycomputeplurality}.
The emphasis on state complexity is motivated by applications where the memory space per agent is severely limited: tiny sensors in a network~\cite{angluin2006networksoffinitestatesensors} or molecules in chemical applications~\cite{czerner2023lowerboundsonthestatecomplexityofpp, alistarh2018recentalgorithmicadvancesinpp}.
\prm{} is always correct under a weakly fair scheduler and shows an elegant design inspired by energy minimization in chemical settings.

\textbf{Notations}
We define a few notations used in the protocol's definition and proofs.

\textit{multisets. }
For sets $S$ and $T$ we write $S^{T}$ to denote the set of functions $f : T \to S$.
If $T$ is finite we call the elements of $\mathbb{N}^T$ \emph{multisets} over $T$.
In this paper, the subset $\subseteq$, the union $\cup$ and the set substraction $\setminus$ operations are systematically generalized to multisets.

\textit{remainder. }
For $x \in \mathbb{Z}$ and $p \in \mathbb{N}^*$, we define $x \bmod p$ as the remainder of the euclidean division of $x$ by $p$.
Note that this is a number in $\mathbb N$, not $\mathbb Z/n\mathbb Z$.

\textit{ranges. }
Let $x, y \in \mathbb{N}$ such that $x \le y$.
$[x, y]$ denotes the set $\{x, x+1, \dots, y-1, y\}$, and $(x, y)$ denotes the set $\{x+1, x+2, \dots, y-2, y-1\}$.

\textit{modulo ranges.}
Let $x, y \in \mathbb{N}$.
We define modulo ranges:  $[x, y]_p$ denotes the set $\{x \bmod p, (x+1) \bmod p, \dots, (x+(y-x) \bmod p - 1) \bmod p, (x + (y-x) \bmod p) \bmod p\}$, and $(x, y)_p$ denotes the set $\{(x+1) \bmod p , (x+2) \bmod p, \dots, (x + (y-x) \bmod p - 2) \bmod p, (x + (y - x) \bmod p - 1) \bmod p\}$.
For instance, $[2,7]_{10} = \{2, 3, 4, 5, 6, 7\}$ and $(8, 3)_{10} = \{9, 0, 1, 2\}$.

\textit{bra-ket. }
We abuse the bra-ket notation, frequently used in quantum mechanics, to note ordered pairs.
For $i, j \in \N$, we write $\braket{i}{j}$ simply to distinguish the different roles of $i$ and $j$.
For an agent storing a bra-ket $\braket{i}{j}$, we refer to $i$ as its bra and $j$ as its ket.

\section{The \prm{} protocol}
\label{sec:protocol_overview}

We present thereafter the \prm{} protocol that runs in every agents: its set of states, input function (to convert the input color into one of the protocol's states), output function (to ask an agent what the majority color is) and the transition function (to specify how two agents update their states when they interact).
\begin{itemize}
  \label{prot:circle}
\item
  \textbf{States:} The set of states $Q$ contains every triples $(i, j, o) \in [0, k-1]^{3}$.
  In the remainder of this paper we will use the bra-ket notation $\braket{i}{j}$ to refer to the first two numbers of the triple, \emph{bra} refers to $i$ and \emph{ket} refers to $j$, while \emph{out} refers to $o$. 
  \item \textbf{Input:} each agent is initialized with $\braket{i}{i}$ and $\emph{out} = i$, where $i$ is the input color of the agent.
  \item \textbf{Output:} return $\emph{out}$
  \item \textbf{Transition function:} We define \emph{weights} for each bra-ket $\braket{i}{j}$ as follows:
  \begin{equation*}
  	w(\braket{i}{j}) =
  	\begin{cases}
  		k & \textit{if } i = j\\
  		(j - i) \bmod k& \textit{otherwise}
  	\end{cases}
  \end{equation*}

  Two agents $a$ and $b$ that interact perform two successive operations:
\begin{enumerate}
	\item $a$ and $b$ exchange their kets in case this strictly decreases the minimum weight of their two bra-kets.
    \item If either $a$ or $b$ is of the form $\braket{i}$ for some $i \in [k]$, set $\emph{out}_a=\emph{out}_b=i$.
\end{enumerate}
\end{itemize}

\section{Proof of correctness}
\label{sec:protocol_proof}

We prove the protocol's correctness in Theorems \ref{theorem:stabilization} and \ref{theorem:correctness}, Theorem \ref{theorem:correctness} directly derives from Lemma \ref{lem:majowins}.
We beforehand introduce Greedy Independent sets, a construction on the input colors used to prove the protocol's correctness, as well as two preliminary lemmas \ref{lemma:majority_color} and \ref{lemma:global_braket_invariant}.

\begin{definition}[Greedy Independent Sets]
  \label{def:greedy_independent_sets}
  Consider the multiset of input colors to our protocol.
  We partition this multiset into sets $G_1, G_2, \dots, G_q$ as follows: store in $G_1$ as many inputs as possible as long as $G_1$ does not contain two equal colors; then apply the same on the remaining inputs to obtain $G_2$, $G_3$, and so on.
\end{definition}

\begin{restatable}{lemma}{majoritycolor}
  \label{lemma:majority_color}
  (Majority Color).
  Assume that there exists a unique color $\mu$ in relative majority, then it holds that $G_q = \{\mu\}$ and there is no $j \neq \mu$ and $p \in [1, q]$ such that $G_p = \{j\}$.
\end{restatable}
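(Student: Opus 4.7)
The plan is to observe that the greedy partition has a clean combinatorial characterization in terms of color multiplicities, and then both claims follow directly.

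First I would note that, by the greedy construction, $G_t$ contains exactly one copy of every color whose multiplicity in the input is at least $t$. This is easily seen by induction on $t$: after $G_1, \dots, G_{t-1}$ have been formed, the multiset of remaining inputs contains $\max(0, m_c - (t-1))$ copies of each color $c$, where $m_c$ denotes the multiplicity of color $c$ in the input. The $t$-th greedy step then picks a copy of exactly those colors $c$ with $m_c \geq t$. Hence $G_t = \{c : m_c \geq t\}$, and the partition terminates at $q = \max_c m_c$.

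From this characterization both assertions follow immediately. Since $\mu$ is in strict relative majority, $m_\mu > m_c$ for every $c \neq \mu$, so $q = m_\mu$ and $G_q = \{c : m_c \geq m_\mu\} = \{\mu\}$. For the second claim, suppose for contradiction that $G_p = \{j\}$ for some $j \neq \mu$ and some $p \in [1, q]$. Then $j \in G_p$ gives $m_j \geq p$, while $\mu \notin G_p$ gives $m_\mu < p$, so $m_j \geq p > m_\mu$, contradicting the uniqueness of $\mu$ as the relative majority color.

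There is no real obstacle here: the only mildly subtle point is formalizing the characterization $G_t = \{c : m_c \geq t\}$, which I would handle with a one-line induction argument as above. Everything else is a direct comparison of multiplicities.
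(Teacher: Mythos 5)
Your proof is correct and follows essentially the same route as the paper: both arguments rest on the observation that each greedy step takes one copy of every color still present, so that color $c$ belongs to $G_t$ exactly when its multiplicity is at least $t$. You simply make this characterization fully explicit ($G_t = \{c : m_c \ge t\}$), whereas the paper phrases the same facts qualitatively ($\mu$ lies in every $G_p$, and membership in the $G_p$ is downward-closed in $p$); both claims then follow by the same multiplicity comparison.
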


\begin{proof}
  Each time a set $G_p$ is constructed according to Definition \ref{def:greedy_independent_sets}, any color whose count is not zero yet is added into $G_p$ and its count is decremented by one.
  The color $\mu$ appears in the population strictly more than any other, it therefore holds that $\forall p \in [1, q]$, $\mu \in G_p$.
  Let now $i \in [0, k-1]$ be a color contained in a set $G_p$ for some $p \in [1, q]$.
  It holds that $\forall l \le p$, $i \in G_l$ because color $i$ is available to populated $G_p$ so $i$ was also available when $G_l$ was filled earlier.
  It therefore cannot be that a color $j \neq \mu$ is contained in $G_q$ as $j$ would be contained in all sets $G_1 \dots G_q$ and thus $j$ would also be in relative majority, a contradiction.
\end{proof}

\begin{restatable}{lemma}{invariant}
  \label{lemma:global_braket_invariant}
  (Global Braket Invariant).
  In every configuration and for all $i \in [0, k-1]$, the number of bras $\bra{i}$ and the number of kets $\ket{i}$ are equal.
\end{restatable}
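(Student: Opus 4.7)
The plan is a straightforward induction on the number of interactions, observing that both the initialization and the transition function preserve the multiset of bras and the multiset of kets independently.

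First I would establish the base case. At initialization each agent with input color $i$ is placed in state $\braket{i}{i}$ with $\emph{out}=i$. Hence, for every $i \in [0, k-1]$, the number of agents whose bra is $i$ equals the number of agents with input color $i$, and the same holds for kets. In particular, the counts of $\bra{i}$ and $\ket{i}$ are equal for every $i$.

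For the inductive step, I would examine each of the two operations in the transition function. Operation (1) is a \emph{swap} of the kets of the two interacting agents (in the case where this strictly decreases the minimum weight of their bra-kets); a swap only permutes the two kets between the two agents and does not touch any bra, so the multiset of bras in the population and the multiset of kets in the population are both preserved. Operation (2) only rewrites the $\emph{out}$ component of the two agents and leaves every bra and every ket untouched. Hence, for each color $i$, the number of agents with bra $i$ and the number of agents with ket $i$ are both exactly the same after the interaction as before it. Combined with the equality at the initial configuration, this yields the claim for every reachable configuration.

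There is no real obstacle here; the only subtlety worth explicitly pointing out is that Operation (1) is a swap rather than an arbitrary modification of the kets, so neither ket is created nor destroyed (only reassigned to the other agent). Once this is observed, the invariant is immediate.
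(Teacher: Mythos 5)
Your proof is correct and follows the same approach as the paper: verify the invariant at initialization (where each agent holds $\braket{i}{i}$) and observe that the only subsequent modification to bra-kets is a swap of kets between two agents, which preserves the multisets of bras and of kets. The paper's proof is just a more condensed version of the same argument.
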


\begin{proof}
  Every agent is initialized with bra-ket $\braket{i}{i}$ for some $i \in [0, k-1]$ and the claim initially holds.
  Agents subsequently only ever update their bra-ket, by exchanging kets among each other.
  The overall number of bras and kets in the population therefore does not change during a computation.
\end{proof}

\begin{restatable}{theorem}{stabilization}
  \label{theorem:stabilization}
  (Stabilization).
  The agents exchange their kets a finite number of times.
\end{restatable}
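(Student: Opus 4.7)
The plan is to exhibit a monovariant: a function $\Phi$ on configurations that strictly decreases at every ket exchange and takes values in a finite totally ordered set, so only finitely many exchanges can occur. I will take $\Phi(C)$ to be the multiset of weights $\{w(a) : a \in C\}$, encoded as the nondecreasing tuple $(s_1, s_2, \ldots, s_n)$, and compare such tuples in the lexicographic order. Since every weight lies in $\{1, 2, \ldots, k\}$, the range of $\Phi$ is the finite set $\{1, \ldots, k\}^n$, so any strictly decreasing sequence of values of $\Phi$ must terminate.

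For the key step, I consider an exchange between agents $a$ and $b$ with weights $w_a, w_b$ (and $u := \min(w_a, w_b)$) before the interaction and $w_a', w_b'$ after, where $v := \min(w_a', w_b') < u$ by the transition rule. At the level of the multiset of weights, the exchange removes two values that are both $\ge u$ and inserts one value equal to $v$ together with a second value $w := \max(w_a', w_b')$. Let $r$ be the smallest index with $s_r > v$ in the old sorted tuple; such $r$ exists because $w_a \ge u > v$ already appears in the tuple. Then $s_1, \ldots, s_{r-1}$ are all $\le v$ and are untouched by the exchange (neither of them equals $w_a$ or $w_b$), so together with the freshly inserted $v$ the new multiset contains at least $r$ entries that are $\le v$, whereas the old multiset contained only $r - 1$. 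Hence the first index at which the new tuple differs from the old is at most $r$, and at that index the new value is strictly smaller, giving $\Phi(C') <_{\mathrm{lex}} \Phi(C)$.

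The main obstacle is handling the other new weight $w$. A short calculation gives $w_a + w_b \equiv w_a' + w_b' \pmod k$ for agents whose bras and kets all differ, so $w$ may well exceed both $w_a$ and $w_b$; consequently, neither the total sum of weights nor the standard multiset (Dershowitz-Manna) ordering provides a monovariant directly. The lex comparison relative to the threshold $v$ circumvents this difficulty, since one only needs to track the count of weights at most $v$, which is independent of $w$. Once the strict decrease is established, termination follows immediately from the finiteness of the codomain of $\Phi$.
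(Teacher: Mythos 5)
Your proof is correct and takes essentially the same approach as the paper: both use the sorted weight tuple as a lexicographically decreasing monovariant (the paper encodes the lexicographic order via ordinal coefficients $\omega^{n-1},\dots,\omega,1$ and concludes by well-foundedness of the ordinals, whereas you use finiteness of the codomain $\{1,\dots,k\}^n$). Your counting argument for entries $\le v$ actually supplies the detail the paper leaves implicit, namely why the first position at which the sorted tuple changes must strictly decrease.
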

\begin{proof}
  We call $\omega$ the smallest ordinal number greater than all the integers.
  We prove the claim by exhibiting a non-negative quantity that strictly decreases at each ket exchange.
  Given a configuration $C$, let $w_1(C), w_2(C) \dots w_n(C)$ be the bra-ket's weights of each agent sorted in increasing order.
  Define
  \begin{align*}
    g(C) =  \omega^{n-1} \cdot w_1(C) + \omega^{n-2} \cdot w_2(C)  + \dots + \omega \cdot w_{n-1} + 1 \cdot w_{n}
  \end{align*}
  Assume that two agents exchange their kets.
  Let $p$ be the lowest index such that $w_p(C)$ changes before and after the ket exchange.
  By design of the protocol, $w_p(C)$ strictly decreases.
  This implies that $g$ strictly decreases when two agents exchange their kets.
  As an ordinal number cannot decrease infinitely many times, the number of ket exchanges is therefore finite.
\end{proof}

\noindent
We define the following special sets of bra-kets only to formulate Lemma~\ref{lem:majowins}.
\begin{definition}[Circle Bra-ket Sets]\label{def:braketsets}
    For a given greedy independent set $G_p$ with $p \in [1,q]$ (Definition~\ref{def:greedy_independent_sets}), let $g_0, g_1, \dots ,g_m$ be the elements of $G_p$ sorted in increasing order and define $$f(G_p) = \{\braket{g_0}{g_1}, \braket{g_1}{g_2}, \dots , \braket{g_m}{g_0}\}$$
\end{definition}

\begin{lemma}\label{lem:majowins}
    After Stabilization (Theorem~\ref{theorem:stabilization}), let $\mathcal C$ be the multiset of bra-kets of the agents. We have that:
    $$\mathcal{C} = \bigcup_{p=1 \dots q} f(G_p)$$
\end{lemma}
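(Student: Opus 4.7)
The plan is to proceed in two phases.

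In Phase~1, I would verify that $\mathcal{D} := \bigcup_{p=1}^q f(G_p)$ is itself a stable configuration: no pairwise ket exchange between bra-kets of $\mathcal D$ strictly decreases the pair's minimum weight. The key structural observation is that each $f(G_p)$ is a monotone cycle on the cyclically-sorted elements of $G_p$, and the greedy construction (Definition~\ref{def:greedy_independent_sets}) yields the nesting $G_1 \supseteq G_2 \supseteq \cdots \supseteq G_q$. Given $\alpha \in f(G_p)$ and $\beta \in f(G_{p'})$ with $p \leq p'$, all four endpoints of $\alpha$ and $\beta$ lie in $G_p$, so the two post-swap bra-kets each traverse an arc of the monotone cycle $f(G_p)$. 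Such an arc is either empty (yielding a self-loop of weight $k$) or includes the weight $w(\alpha)$ among its consecutive-gap summands, and thus has weight at least $w(\alpha) \geq \min(w(\alpha), w(\beta))$. Stability of $\mathcal D$ follows.

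In Phase~2, I would show that any stable configuration $\mathcal C$ with the color profile $(n_0, \ldots, n_{k-1})$ equals $\mathcal D$; together with Theorem~\ref{theorem:stabilization} and Phase~1, this establishes the lemma. For each color $i$ and radius $r \in [1, k-1]$, introduce the tally
\[ T_i(r) := \big|\{\braket{i}{j} \in \mathcal C : (j-i) \bmod k \in [1, r]\}\big| \]
and $\nu_i(r) := \max\{n_j : j \in [i+1, i+r]_k\}$. A direct computation shows $T_i(r) = \min(n_i, \nu_i(r))$ in $\mathcal D$ (using that $i$ belongs to $G_1, \ldots, G_{n_i}$ and that the successor of $i$ in $G_p$ is within distance $r$ iff $\nu_i(r) \geq p$). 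I would then show the same equality holds in any stable $\mathcal C$; combined with the self-loop counts fixed by Lemma~\ref{lemma:global_braket_invariant}, this uniquely pins down the entire multiset. For the upper bound $T_i(r) \leq \nu_i(r)$, assume towards contradiction that $T_i(r) > \nu_i(r)$. Pigeonhole then yields two bra-kets $\braket{i}{j_1}, \braket{i}{j_2} \in \mathcal C$ with distinct $j_1, j_2 \in [i+1,i+r]_k$, and picking any bra-ket $\braket{j_1}{\ell} \in \mathcal C$ (which exists since $n_{j_1} \geq 1$), the swap with $\braket{i}{j_2}$ is shown to strictly decrease the pair's minimum weight. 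The matching lower bound follows by counting the bras compatible with the maximizer color in $[i+1,i+r]_k$.

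The main obstacle will be Phase~2. Verifying that the swap always strictly decreases the minimum is delicate when $\ell$ itself lies close to $j_1$: the pre-swap minimum then becomes small and the direct argument breaks. A strong induction on the radius $r$, leveraging the already-established equalities $T_{j_1}(r') = \min(n_{j_1}, \nu_{j_1}(r'))$ for $r' < r$, should allow selecting a better candidate $\ell$ and restoring the contradiction.
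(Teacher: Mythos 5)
Your overall strategy is genuinely different from the paper's: the paper proves $\bigcup_{p \le r} f(G_p) \subseteq \mathcal C$ by induction on the greedy layer $r$, whereas you try to characterize an arbitrary stable configuration through the tallies $T_i(r)$ by induction on the radius. Your Phase~1 is correct but logically unnecessary (the lemma only requires that the reached stable configuration equals $\mathcal D := \bigcup_p f(G_p)$, not that $\mathcal D$ is itself stable), and your computation of $T_i(r)=\min(n_i,\nu_i(r))$ on $\mathcal D$ is right and would indeed pin down the multiset. The problem is that Phase~2, which carries all the weight, is left open precisely at its critical step, and the sketched fix does not obviously close it.

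Concretely, for the upper bound you need a forced swap whenever $T_i(r) > \nu_i(r)$. Writing $d_1=(j_1-i)\bmod k < d_2=(j_2-i)\bmod k$, the swap of $\braket{j_1}{\ell}$ with $\braket{i}{j_2}$ is forced if and only if $w(\braket{j_1}{\ell}) > d_2-d_1$, i.e., if and only if $T_{j_1}(d_2-d_1) < n_{j_1}$. Even granting the full inductive hypothesis at smaller radii, this translates via $T_{j_1}(r')=\min(n_{j_1},\nu_{j_1}(r'))$ into the condition that every color in $[j_1+1,\,j_2]_k$ has multiplicity strictly below $n_{j_1}$ --- a condition on the inputs that the pigeonhole pair $(j_1,j_2)$ need not satisfy: a color of large multiplicity may sit strictly between $j_1$ and $j_2$ without ever appearing as a ket of a bra-$i$ agent, in which case \emph{every} bra-ket with bra $j_1$ has weight at most $d_2-d_1$ and no such partner exists. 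One can partially repair this by swapping $\braket{i}{j_2}$ against the maximizer color of the window instead of against $j_1$, but the case where $\nu_i(r)$ is attained at $j_2$ itself still resists, and the lower bound $T_i(r)\ge\min(n_i,\nu_i(r))$ runs into the symmetric difficulty (all kets $\ket{j^*}$ may be attached to bras lying in $(i,j^*)_k$, again leaving no forced swap). The paper sidesteps all of this with its Claim~1: after peeling off $r$ greedy layers, every color with a remaining bra has multiplicity at least $r+1$ and hence belongs to $G_{r+1}$, so no residual color can lie strictly between two consecutive elements of $G_{r+1}$; that is the structural fact your radius induction lacks and that your sketch does not replace. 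As written, Phase~2 is a plan with an acknowledged hole at its center, not a proof.
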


\begin{proof}

 We prove the predicate $H(r)$ by induction on $r \in [0, q]$:
  \begin{align}
    \bigcup_{p = 1 \dots r} f(G_p) \subseteq \mathcal{C} \tag{$H(r)$}
  \end{align}

  \noindent
  The base case for $r=0$ is trivial. 
  Let $r \in [0, q-1]$, we assume that $H(r)$ holds and show that $H(r+~1)$ also holds.
  We define the subconfiguration $\mathcal{C}[r+1] = \mathcal{C} \setminus \cup_{p = 1 \dots r} f(G_p)$.

  Case 1: $\cup_{p=r+1}^q G_p$ contains only elements from one color.
  Let $i$ be that color.
    
    Then for any other color $j \neq i$, there are at most $r$ many bras $\bra{j}$ and as many kets $\ket{j}$, which are all included in $\cup_{p = 1 \dots r} f(G_p)$. Thus all agents in $\mathcal C[r+1]$ are of the form $\braket{i}$. Since
    $\{\braket{i}\} = f(G_{r+1})$, we have $ \cup_{p = 1 \dots r+1} f(G_p) \subseteq \mathcal{C}$.

  Case 2: There are at least two different colors in $G_{r+1}$.
  
  We note $g_0, g_1, \dots, g_m$ the elements of $G_{r+1}$ sorted in increasing order.
  Let $l \in [0, m]$. To lighten notations, we will mean $l + s \bmod m+1$ each time we write $l+s$ in the remainder of this proof.
  We prove that, if there is no agent with bra-ket $\braket{g_l}{g_{l+1}}$ in $C[r+1]$, then there exist two agents whose interaction creates that bra-ket, a contradiction with the stability hypothesis.
  First notice that $\bra{g_l}$ and $\ket{g_{l+1}}$ are in $\mathcal{C}[r+1]$.
  Indeed, note that there are at least $r+1$ many $\bra{g_l}$ and $r+1$ many $\bra{g_{l+1}}$ in $\mathcal C$, as there are at least $r+1$ many agents with color $g_l$ and $g_{l+1}$ initially.
  By Theorem~\ref{lemma:global_braket_invariant}, this means we have at least $r+1$ many $\ket{g_{l+1}}$ in $\mathcal C$.
  Because each $f_p$ for $p \le r$ contains exactly one $\bra{g_l}$ and one $\ket{g_{l+1}}$, we have that both $\bra{g_l}$ and $\ket{g_{l+1}}$ are in $\mathcal{C}[r+1]$.
  
  Assuming by contradiction that there is no agent with bra-ket $\braket{g_l}{g_{l+1}}$ in $\mathcal{C}[r+1]$, then there exists an agent with bra-ket $\braket{g_l}{j}$ and an agent with bra-ket $\braket{i}{g_{l+1}}$ in $\mathcal{C}[r+1]$ for some $i$ and $j$.
  We show that those two agents exchange their kets if they interact.

  \begin{claim}\label{claim:notininterval}
      $i, j \notin (g_l, g_{l+1})_m$
  \end{claim}
  \begin{proof}
      By contradiction, assume that $i$ is in $(g_l, g_{l+1})_m$. 
      A $\bra{i}$ in $\mathcal C[r+1]$ indicates that $i$ had initially at least $r+1$ agents supporting it, as by contradiction if it wasn't the case, all the $\bra{i}$ would have been in $\cup_{p=1}^r f(G_p)$.
      By construction of $G_{r+1}$, we must have that $i$ is in $G_{r+1}$, and thus, $g_l$ and $g_{l+1}$ are not consecutive in the ordered list of $G_{r+1}$, a contradiction.
      The case $j \in (g_l, g_{l+1})_m$ is symmetric.
  \end{proof}
  If $i \neq g_{l+1}$, it holds by Claim~\ref
{claim:notininterval} that
  \begin{align*}
    w(\braket{g_l}{g_{l+1}})= (g_{l+1} - g_l) \bmod k < (g_{l+1} - i) \bmod k = w(\braket{i}{g_{l+1}})
  \end{align*}
  Otherwise, if $i=g_{l+1}$:
  \begin{align*}
    w(\braket{g_l}{g_{l+1}})\quad=\quad (g_{l+1} - g_l) \bmod k \quad < \quad k \quad =\quad w(\braket{i}{g_{l+1}})
  \end{align*}

  Similarly, if $j \neq g_l$, it holds by Claim~\ref
{claim:notininterval} that
  \begin{align*}
    w(\braket{g_l}{g_{l+1}}) \ = \ (g_{l+1} - g_l) \bmod k  \ <  \ (j - g_{l}) \bmod k  \ = \ w(\braket{g_l}{j})
  \end{align*}
  Otherwise, if $j=g_{l}$:
  \begin{align*}
    w(\braket{g_l}{g_{l+1}})\quad=\quad (g_{l+1} - g_l) \bmod k \quad < \quad k \quad =\quad w(\braket{g_l}{j})
  \end{align*}
  
  This proves that exchanging kets between $\braket{g_l}{j}$ and $\braket{i}{g_{l+1}}$ reduces the minimum weight.  
  An interaction between the two agents eventually happens as the scheduler is weakly fair, this interaction would therefore trigger a ket exchange, which is in contradiction with the stability hypothesis: we deduce that $\braket{g_l}{g_{l+1}} \in \mathcal{C}[r+1]$, which implies $f(G_{r+1}) \subseteq \mathcal{C}[r+1]$ and therefore $H(r+1)$ holds.
  
  We proved by induction that $H(q)$ holds:
  \begin{align*}
    \bigcup_{p = 1 \dots q} f(G_p) \subseteq \mathcal{C}
  \end{align*}
  As $|\cup_{p = 1 \dots q} f(G_p)| = |\mathcal{C}|$ we can rewrite $H(q)$ as an equality and the claim holds. 
\end{proof}

\begin{restatable}{theorem}{correctness}
  \label{theorem:correctness}
  (Correctness).
  Assume that there exists a unique color $\mu$ in relative majority.
  In the \prm{} protocol, all agents eventually output $\mu$ under a weakly fair scheduler.
\end{restatable}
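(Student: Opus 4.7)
The plan is to combine the three ingredients already established — stabilization (Theorem~\ref{theorem:stabilization}), the characterization of the stable bra-ket multiset (Lemma~\ref{lem:majowins}), and the structural fact about singletons among the greedy independent sets (Lemma~\ref{lemma:majority_color}) — and then invoke weak fairness on the $out$-update rule.

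\textbf{Step 1: identify the stable bra-ket picture.} I would first apply Theorem~\ref{theorem:stabilization} to obtain a (finite) time $T$ after which no ket exchange ever occurs again; from time $T$ on the multiset of bra-kets in the population is frozen and, by Lemma~\ref{lem:majowins}, is exactly $\bigcup_{p=1}^{q} f(G_p)$. The next observation is that a bra-ket of the form $\braket{i}{i}$ (the only states that can trigger the output-update rule) can appear in $f(G_p)$ only when $|G_p|=1$, in which case $G_p=\{i\}$ forces the unique bra-ket in $f(G_p)$ to be $\braket{i}{i}$. Lemma~\ref{lemma:majority_color} then says two crucial things at once: (a) $G_q=\{\mu\}$, so at least one agent carries bra-ket $\braket{\mu}{\mu}$ from time $T$ onward, and (b) no $G_p$ equals $\{j\}$ for any $j\neq \mu$, so no agent in the stable configuration ever has bra-ket $\braket{j}{j}$ with $j\neq\mu$.

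\textbf{Step 2: propagate the correct output.} From time $T$ on, the bra-kets do not move, so every future interaction can only affect the $out$ fields via the second step of the transition function. By Step~1, the only agents matching the pattern $\braket{i}$ after time $T$ have $i=\mu$; hence any interaction that modifies an $out$ field must set it to $\mu$. Fix any agent $a$ whose bra-ket is $\braket{\mu}{\mu}$ (which exists by Step~1). Weak fairness guarantees that after time $T$, $a$ still interacts with every other agent at least once; each such interaction sets the partner's $out$ (and $a$'s own $out$) to $\mu$. Therefore, in finitely many additional steps every agent has $out=\mu$, and any subsequent $out$-update can only re-assign $\mu$. This establishes ``eventually outputs $\mu$ forever''.

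\textbf{Expected obstacle.} There is essentially no hard step left: the delicate work was already carried out in Lemma~\ref{lem:majowins}. The only point that requires a careful sentence is the ``forever'' part of the claim, i.e.\ ruling out that a later interaction might overwrite $out=\mu$ with some other color; this is handled precisely by observing that, after stabilization, no $\braket{j}{j}$ with $j\neq\mu$ exists in the population, so the output-update rule cannot produce anything other than $\mu$.
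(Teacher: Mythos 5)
Your proposal is correct and follows essentially the same route as the paper's own proof: invoke stabilization, use Lemma~\ref{lem:majowins} together with Lemma~\ref{lemma:majority_color} to conclude that after stabilization some agent holds $\braket{\mu}{\mu}$ and no agent holds $\braket{j}{j}$ for $j\neq\mu$, then let weak fairness propagate $out=\mu$. Your write-up is merely more explicit than the paper's (which compresses the propagation and the ``forever'' argument into one sentence), but the substance is identical.
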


\begin{proof}
    By Lemma~\ref{lem:majowins} and Lemma~\ref{lemma:majority_color}, after Stabilization (Theorem~\ref{theorem:stabilization}), since we assumed that there is only one majority color, there exists at least one agent in bra-ket $\braket{\mu}$ and none in bra-ket $\braket{j}$ for $j \neq \mu$.
    The agent(s) with bra-ket $\braket{\mu}{\mu}$ will transmit their output color to the rest of the population and the claim follows.
\end{proof}

\section{Extensions}
We plan to expand the functionalities of \prm{} to handle ties and/or to operate in an unordered setting.
Those extensions will be published as a more complete version of the present work.

\textbf{Handling ties.}
We can extend \prm{} to handle ties in multiple ways.
For instance, all agents can indicate a tie with a special state (tie report), agree on one unique winning color (tie break), or output their own color if their input color wins while the losers output any winning color (tie share).
It is possible to implement all those ways to handle ties by adding simple extra-layer protocols on top of \prm{} while keeping the state complexity at $O(k^{3})$.

\textbf{Unordered setting.}
The \prm{} protocol, which we introduce in this work, relies on numerical representations of the colors in order to compute some kind of distance between them.
It can be adapted to the unordered setting (in which agents are only able to compare colors for equality and memorize them) using~$O(k^4)$ states.
For that we propose a new protocol to generate an ordering between colors using~$O(k^2)$ states.
Adapting a protocol proposed in~\cite{shukai2012selfstabilizingleaderelection} we perform leader-election between all agents of the same color (using the asymmetry of interactions) and have the leaders increment a numeric label every time they meet another leader with the same label. The non-leaders simply copy the label of their leader.
Similar to~\cite{natale2019necessarymemorycomputeplurality} we then combine the ordering protocol with \prm{} by re-initializing agents of some color whenever their numeric label (representing that color) changes. For that we need to put agents into special states in which they wait to undo changes they previously made to the population until they are ``consistent'' again and ready to be re-initialized.
In order to use as few states as possible we do not explicitly store the output of the ordering protocol, but write it directly to the bra of an agent.

\begin{acks}
    This project has received funding from the European Research Council (ERC) under the European Union's Horizon 2020 research and innovation programme (MoDynStruct, No. 101019564)  \includegraphics[width=0.9cm]{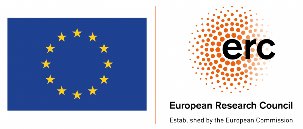} and the Austrian Science Fund (FWF) grant \href{https://www.doi.org/10.55776/I5982}{DOI 10.55776/I5982}, and grant \href{https://www.doi.org/10.55776/P33775}{DOI 10.55776/P33775} with additional funding from the netidee SCIENCE Stiftung, 2020–2024 and the German Research Foundation (DFG), grant 470029389 (FlexNets).
\end{acks}

\bibliographystyle{ACM-Reference-Format}
\bibliography{references}


\begin{thebibliography}{12}


\ifx \showCODEN    \undefined \def \showCODEN     #1{\unskip}     \fi
\ifx \showDOI      \undefined \def \showDOI       #1{#1}\fi
\ifx \showISBNx    \undefined \def \showISBNx     #1{\unskip}     \fi
\ifx \showISBNxiii \undefined \def \showISBNxiii  #1{\unskip}     \fi
\ifx \showISSN     \undefined \def \showISSN      #1{\unskip}     \fi
\ifx \showLCCN     \undefined \def \showLCCN      #1{\unskip}     \fi
\ifx \shownote     \undefined \def \shownote      #1{#1}          \fi
\ifx \showarticletitle \undefined \def \showarticletitle #1{#1}   \fi
\ifx \showURL      \undefined \def \showURL       {\relax}        \fi
\providecommand\bibfield[2]{#2}
\providecommand\bibinfo[2]{#2}
\providecommand\natexlab[1]{#1}
\providecommand\showeprint[2][]{arXiv:#2}

\bibitem[Alistarh and Gelashvili(2018)]%
        {alistarh2018recentalgorithmicadvancesinpp}
\bibfield{author}{\bibinfo{person}{Dan Alistarh} {and} \bibinfo{person}{Rati Gelashvili}.} \bibinfo{year}{2018}\natexlab{}.
\newblock \showarticletitle{Recent Algorithmic Advances in Population Protocols}.
\newblock \bibinfo{journal}{\emph{ACM SIGACT News}} \bibinfo{volume}{49}, \bibinfo{number}{3} (\bibinfo{year}{2018}), \bibinfo{pages}{63--73}.
\newblock


\bibitem[Angluin et~al\mbox{.}(2006)]%
        {angluin2006networksoffinitestatesensors}
\bibfield{author}{\bibinfo{person}{Dana Angluin}, \bibinfo{person}{James Aspnes}, \bibinfo{person}{Zoë Diamadi}, \bibinfo{person}{Michael~J. Fischer}, {and} \bibinfo{person}{René Peralta}.} \bibinfo{year}{2006}\natexlab{}.
\newblock \showarticletitle{Computation in networks of passively mobile finite-state sensors}.
\newblock \bibinfo{journal}{\emph{Distributed computing}} \bibinfo{volume}{18}, \bibinfo{number}{4} (\bibinfo{year}{2006}), \bibinfo{pages}{235--253}.
\newblock


\bibitem[Angluin et~al\mbox{.}(2007)]%
        {s00446-007-0040-2}
\bibfield{author}{\bibinfo{person}{Dana Angluin}, \bibinfo{person}{James Aspnes}, \bibinfo{person}{David Eisenstat}, {and} \bibinfo{person}{Eric Ruppert}.} \bibinfo{year}{2007}\natexlab{}.
\newblock \showarticletitle{The computational power of population protocols}.
\newblock \bibinfo{journal}{\emph{Distributed Computing}}  \bibinfo{volume}{20} (\bibinfo{year}{2007}), \bibinfo{pages}{279--304}.
\newblock


\bibitem[Bankhamer et~al\mbox{.}(2022)]%
        {bankhamer2022ppforexactpluralityconsensus}
\bibfield{author}{\bibinfo{person}{Gregor Bankhamer}, \bibinfo{person}{Petra Berenbrink}, \bibinfo{person}{Felix Biermeier}, \bibinfo{person}{Robert Els{\"a}sser}, \bibinfo{person}{Hamed Hosseinpour}, \bibinfo{person}{Dominik Kaaser}, {and} \bibinfo{person}{Peter Kling}.} \bibinfo{year}{2022}\natexlab{}.
\newblock \showarticletitle{Population Protocols for Exact Plurality Consensus: How a small chance of failure helps to eliminate insignificant opinions}. In \bibinfo{booktitle}{\emph{41st Annual ACM Symposium on Principles of Distributed Computing (PODC 2022)}}. \bibinfo{address}{Salerno (Italy)}, \bibinfo{pages}{224--234}.
\newblock


\bibitem[Becchetti et~al\mbox{.}(2015)]%
        {becchetti2015pluralityconsensus}
\bibfield{author}{\bibinfo{person}{Luca Becchetti}, \bibinfo{person}{Andrea E.~F. Clementi}, \bibinfo{person}{Emanuele Natale}, \bibinfo{person}{Francesco Pasquale}, {and} \bibinfo{person}{Riccardo Silvestri}.} \bibinfo{year}{2015}\natexlab{}.
\newblock \showarticletitle{Plurality consensus in the gossip model}. In \bibinfo{booktitle}{\emph{26th Annual ACM-SIAM Symposium on Discrete Algorithms (SODA 2015)}}. \bibinfo{address}{San Diego (USA)}, \bibinfo{pages}{371--390}.
\newblock


\bibitem[Cai et~al\mbox{.}(2012)]%
        {shukai2012selfstabilizingleaderelection}
\bibfield{author}{\bibinfo{person}{Shukai Cai}, \bibinfo{person}{Taisuke Izumi}, {and} \bibinfo{person}{Koichi Wada}.} \bibinfo{year}{2012}\natexlab{}.
\newblock \showarticletitle{How to Prove Impossibility Under Global Fairness: On Space Complexity of Self-Stabilizing Leader Election on a Population Protocol Model}.
\newblock \bibinfo{journal}{\emph{Theory of computing systems}} \bibinfo{volume}{50}, \bibinfo{number}{3} (\bibinfo{year}{2012}), \bibinfo{pages}{433--445}.
\newblock


\bibitem[Czerner et~al\mbox{.}(2023)]%
        {czerner2023lowerboundsonthestatecomplexityofpp}
\bibfield{author}{\bibinfo{person}{Philipp Czerner}, \bibinfo{person}{Javier Esparza}, {and} \bibinfo{person}{Jérôme Leroux}.} \bibinfo{year}{2023}\natexlab{}.
\newblock \showarticletitle{Lower bounds on the state complexity of population protocols}.
\newblock \bibinfo{journal}{\emph{Distributed computing}} \bibinfo{volume}{36}, \bibinfo{number}{3} (\bibinfo{year}{2023}), \bibinfo{pages}{209--218}.
\newblock


\bibitem[Doty(2014)]%
        {doty2014timingincrns}
\bibfield{author}{\bibinfo{person}{David Doty}.} \bibinfo{year}{2014}\natexlab{}.
\newblock \showarticletitle{Timing in chemical reaction networks}. In \bibinfo{booktitle}{\emph{25th Annual ACM-SIAM Symposium on Discrete Algorithms (SODA 2014)}}. \bibinfo{address}{Portland (USA)}, \bibinfo{pages}{772--784}.
\newblock


\bibitem[Els{\"a}sser and Radzik(2018)]%
        {elsasser2018recentresultsinpp}
\bibfield{author}{\bibinfo{person}{Robert Els{\"a}sser} {and} \bibinfo{person}{Tomasz Radzik}.} \bibinfo{year}{2018}\natexlab{}.
\newblock \showarticletitle{Recent Results in Population Protocols for Exact Majority and Leader Election}.
\newblock In \bibinfo{booktitle}{\emph{The Distributed Computing Column}}, \bibfield{editor}{\bibinfo{person}{Stefan Schmid}} (Ed.).
\newblock


\bibitem[G\k{a}sieniec et~al\mbox{.}(2017)]%
        {gasieniec2017deterministicppforexactmajorityandpluarlity}
\bibfield{author}{\bibinfo{person}{Leszek G\k{a}sieniec}, \bibinfo{person}{David Hamilton}, \bibinfo{person}{Russell Martin}, \bibinfo{person}{Paul~G. Spirakis}, {and} \bibinfo{person}{Grzegorz Stachowiak}.} \bibinfo{year}{2017}\natexlab{}.
\newblock \showarticletitle{Deterministic Population Protocols for Exact Majority and Plurality}. In \bibinfo{booktitle}{\emph{20th International Conference on Principles of Distributed Systems (OPODIS 2016)}}. \bibinfo{address}{Madrid (Spain)}, \bibinfo{pages}{14:1--14:14}.
\newblock


\bibitem[Lipton(1976)]%
        {lipton1976}
\bibfield{author}{\bibinfo{person}{Richard~J. Lipton}.} \bibinfo{year}{1976}\natexlab{}.
\newblock \showarticletitle{The Reachability Problem Requires Exponential Space}.
\newblock  (\bibinfo{year}{1976}).
\newblock


\bibitem[Natale and Ramezani(2019)]%
        {natale2019necessarymemorycomputeplurality}
\bibfield{author}{\bibinfo{person}{Emanuele Natale} {and} \bibinfo{person}{Iliad Ramezani}.} \bibinfo{year}{2019}\natexlab{}.
\newblock \showarticletitle{On the Necessary Memory to Compute the Plurality in Multi-Agent Systems}. In \bibinfo{booktitle}{\emph{11th International Conference on Algorithms and Complexity (CIAC 2019)}}. \bibinfo{address}{Rome (Italy)}, \bibinfo{pages}{323--338}.
\newblock
\showeprint[arXiv]{1901.06549}
\urldef\tempurl%
\url{http://arxiv.org/abs/1901.06549}
\showURL{%
\tempurl}


\end{thebibliography}

\end{document}